\newtheorem{assumption}{Assumption}
\newtheorem{lemma}{Lemma}
\newtheorem{remark}{Remark}
\newtheorem{corollary}{Corollary}
\newtheorem{proposition}{Proposition}
\def\begequarr{\begin{eqnarray}}
\def\endequarr{\end{eqnarray}}
\def\begequarrs{\begin{eqnarray*}}
\def\endequarrs{\end{eqnarray*}}
\def\begarr{\begin{array}}
\def\endarr{\end{array}}
\def\begequ{\begin{equation}}
\def\endequ{\end{equation}}
\def\lab{\label}
\def\begdes{\begin{description}}
\def\enddes{\end{description}}
\def\begenu{\begin{enumerate}}
\def\begite{\begin{itemize}}
\def\endite{\end{itemize}}
\def\endenu{\end{enumerate}}
\def\lef[{\left\{\begin{array}}
\def\rig]{\end{array}\right\}}
\def\begcen{\begin{center}}
\def\endcen{\end{center}}
\def\begrem{\begin{remark}\rm}
\def\endrem{\end{remark}}
\def\begsubequ{\begin{subequations}}
\def\endsubequ{\end{subequations}}
\def\calj{{\cal J}}
\def\call{{\cal L}}
\def\calj{{\cal J}}
\def\liminf{\lim_{t \to \infty}}
\def\L2e{{\cal L}_{2e}}
\def\bul{\noindent $\bullet\;\;$}
\def\rea{\mathbb{R}}
\def\adj{\mbox{adj}}
\def\col{\mbox{col}}
\def\hal{{1 \over 2}}
\def\et{\varepsilon_t}
\def\begmat#1{\begin{bmatrix}#1\end{bmatrix}}
\def\begali#1{\begin{align}{#1}\end{align}}
\def\begalis#1{\begin{align*}{#1}\end{align*}}
\def\CEP{{\it Control Engineering Practice}}
\def\TAC{{\it IEEE Trans. Automatic Control}}
\def\SCL{{\it Systems \& Control Letters}}
\def\AUT{{\it Automatica}}
\title{\LARGE \bf
A Flux and Speed Observer for Induction Motors with Unknown Rotor Resistance and Load Torque and no Persistent Excitation Requirement
}
\author{Anton Pyrkin$^{1, 2}$, Alexey Bobtsov$^{1, 2}$, Alexey Vedyakov$^{1}$, Romeo Ortega$^{1,3}$,\\ Anastasiia Vediakova$^{4}$, Madina Sinetova$^{1}$ %
\thanks{*This article is supported in ITMO University under the Ministry of Science and Higher Education of Russian Federation (goszadanie 2019-0898) and by Government of Russian Federation (grant 08-08).}
\thanks{$^{1}$ Faculty of Control Systems and Robotics, ITMO University, Kronverksky av., 49, 197101, Saint Petersburg, Russia, 
{\tt\small vedyakov@itmo.ru}}
\thanks{$^{2}$ Center for Technologies in Robotics and Mechatronics Components, Innopolis University, Innopolis, Russia}
\thanks{$^{3}$ Departamento Acad\'{e}mico de Sistemas Digitales, ITAM, Ciudad de M\'exico, M\'{e}xico}
\thanks{$^{4}$ Department of Computer Applications and Systems, St.Petersburg State University, 7/9 Universitetskaya nab., St. Petersburg, 199034,~Russia}
\thanks{Conflict of interest --- none declared}
}
\begin{document}

\maketitle
%
\begin{abstract}
In this paper we address the problems of flux and speed observer design for voltage-fed induction motors with unknown rotor resistance and load torque. The only measured signals are stator current and control voltage. Invoking the recently reported Dynamic Regressor Extension and Mixing-Based Adaptive Observer (DREMBAO) we provide the first global solution to this problem. The proposed DREMBAO achieves {\em asymptotic} convergence under an excitation condition that is strictly weaker than persistent excitation. If the latter condition is assumed the convergence is {\em exponential}. 
\end{abstract}

\section{INTRODUCTION}
\lab{sec1}
Because of its great practical and theoretical importance control of  induction motors (IM) has attracted much attention from researchers and engineers  for over 50 years now. More than 5,000 journal papers have been published on IM control, being to date still a very active research area. The industrial interest in IM control is documented by over 80,000 patents on this subject. In spite of the intense research efforts in the field of IM control there are several important problems that remain open, {\em cf.}, \cite{MARTOMVERbook,ORTetalbook}.

We address in the paper the problems of estimation of the rotor resistance and the load torque, as well as the design of flux and speed observers in the absence of the knowledge of these parameters. Providing an answer to these questions is relevant for the solution of the so-called sensorless control problem as well as in fault detection and motor calibration tasks. For a review of the literature the reader is referred to the excellent, comprehensive research monograph \cite{MARTOMVERbook}, see also \cite{ASTKARORTbook,NAMbook,ORTetalbook}.

In this paper we give solutions to the following.\\ 

\noindent {\bf Adaptive Observer Problem} Given the $5$th-order dynamics of the {\em voltage-fed} IM with
\begite
\item {\em measurable} stator current and voltage;
\item {\em known} stator inductance and resistance and leakage coefficient;
\item {\em unknown} rotor resistance and mechanical load torque.
\endite
Design an observer for the rotor flux and the speed which ensures global {\em asymptotic} convergence of the unknown parameters and unmeasurable states under excitation conditions that are {\em strictly weaker} than the classical persistent excitation (PE) requirement \cite[Section 2.5]{SASBOD}.\\  

\noindent {\bf State of the Art} Many authors have studied these problems, under different assumptions, and adopting various approaches including: high-gain based techniques, like sliding modes; designs based on linear approximations, like Kalman filtering and Model Reference Adaptive Systems; and schemes based on Neural Network or Fuzzy Control. 

We concentrate in this paper on results for which a rigorous {\em mathematical} proof, under reasonable, verifiable assumptions is provided. In this sense, to the best of our knowledge, the aforementioned questions are open, and we provide in this paper the first solutions to them. The requirement of ``reasonable, verifiable assumptions" leads us to rule out schemes based on open-loop integration of IM currents and/or voltages, {\em cf.}, \cite{BAZetalwc17,VERetalcep17}, which is not practically feasible. 

Several solutions for particular cases of the problem are known and some of them are reviewed below.\\ 

\noindent \bul The following results assume the motor {\em speed is measurable}.\footnote{We refer the interested reader to the quoted monograph references to find out the journal where these results were first reported.} 

\noindent - In \cite[Subsection 3.2]{MARTOMVERbook} a rotor flux observer that estimates the rotor resistance is proposed. The observer has a redundant dynamics and convergence is guaranteed under a PE condition imposed on some of the estimated signals \cite[Equation (3.59)]{MARTOMVERbook}.

\noindent - In \cite[Subsection 10.3]{ASTKARORTbook}  a rotor flux observer that estimates the load torque, assuming known the rotor resistance is proposed. The observer is proposed as part of a globally convergent speed tracking controller that does not require any excitation assumption.

\noindent -  A load torque estimator, assuming the rotor flux can be recovered exponentially fast, is proposed in  \cite[Subsection 3.2]{MARTOMVERbook}.

\noindent -  An adaptive observer-based speed control with uncertain load torque that estimates the rotor flux assuming all machine parameters known is proposed in  \cite[Subsection 4.3]{MARTOMVERbook}. This result is extended to the case of unknown rotor resistance with known lower bound. 

\noindent \bul For the case when rotor speed is {\em not measurable} we are only aware of the result of \cite[Subsection 5.4]{MARTOMVERbook} where a local result assuming known the rotor resistance and some PE conditions is reported.\\

\noindent {\bf About the Paper} To provide a solution to the adaptive observer problem stated above we rely on three essential components.
\begenu
\item[{\bf C1}] The key observation made in \cite[Subsection 1.2]{MARTOMVERbook} that the derivatives of the flux and the current are related by a simple relation. This observation was already used in \cite{BAZetalwc17} to address the adaptive observer problem but, a practically inadmissible, open-loop integration of the IM currents and voltages was proposed. 
\item[{\bf C2}] Proceeding from a reparameterization of the aforementioned equation---using the norm of the rotor flux---we derive a {\em linear regression equation} (LRE), which involves unknown parameters, unmeasurable states and their {\em product}. This step was first reported in \cite{PYRetalcdc19}.
\item[{\bf C3}]  The use of DREMBAO \cite{PYRetalscl19}, which is an advanced technique for the design of adaptive observers able to handle the presence of {\em products} between unknown parameters and unobservable states. Towards this end, DREMBAO uses the {\em dynamic regressor extension and mixing} (DREM) parameter estimation procedure \cite{ARAetaltac17,ORTetaltac20} to generate scalar regressions, using which we obtain the estimated parameters and the observed states. 
\endenu 

The remainder of the paper is organized as follows. In Section~\ref{sec2} we present the model of the IM and the problem formulation. The novel parameterization of the IM model and its associated vector LRE are described in Section~\ref{sec3}. In Sections~\ref{sec4} and \ref{sec5} we present the new flux observer and the rotor resistance estimator, respectively. Using these flux and rotor resistance estimates in Section~\ref{sec6} we propose the new speed observer and load torque estimator. In Section~\ref{sec7} we present simulation results, which demonstrate the effectiveness of the proposed approach. Our work is wrapped-up with some conclusions and future research in Section~\ref{sec8}.\\

\noindent {\bf Notation.} $I_n$ is the $n \times n$ identity matrix. We use $\rea_+:=(0,\infty)$. For $x \in \rea^n$, we denote the Euclidean norm $|x|^2:=x^\top x$. All mappings are assumed smooth.  For an LTI filter $G(p) \in \rea(p)$ its action on a signal $w(t)$ is denoted $G(p)[w]$.  
\section{PROBLEM FORMULATION}
\lab{sec2}
%
Consider the electrical dynamics of the fixed-frame model of the voltage-fed induction motor~\cite[Equation (4.42)]{NAMbook}
\begsubequ
\lab{imdyn}
\begali{
\dot\lambda &= -\left({R_r\over L_r}I_2-n_p \calj\omega\right)\lambda+R_r\beta i,
\lab{im1_flux}
\\
L_s \sigma{di\over dt}&=  - (R_s+R_r\beta^2) i +\beta\left({R_r \over L_r} I_2 - n_p \calj\omega \right)\lambda+ v,
\lab{im1_current}
}
\endsubequ
where $\lambda,i,v \in \rea^2$ are the rotor flux, the stator current and the control voltage, respectively, $\omega \in {\mathbb D}$ is the rotor speed, $L_s,R_r,L_r,R_s,n_p,\sigma$ are positive constants representing the stator inductance, rotor resistance, rotor inductance, stator resistance, number of pole pairs and leakage parameter, respectively. To simplify the notation, we defined $\beta:={M\over L_r}$, where $M$ is the mutual inductance, and  
$$
\calj:=\begmat{0 & -1 \\ 1 & 0}.
$$ 

The mechanical dynamics, on the other hand, is described by
 \begin{align}
J\dot\omega &=-n_p \beta \lambda^\top  \calj i-T_L,
\lab{im1_mecdyn}
\end{align}
where $J$ is the rotor inertia and $T_L$ is the load torque, which is assumed {\em constant}.

The goal is to design an observer for the rotor flux $\lambda$ and the speed $\omega$, assuming only the current $i$, and the voltage $v$ are measured, that the electrical parameters $L_s$, $R_s$ and $\sigma$ are known, but $R_r$ and $T_L$ are {\em unknown}. As discused in the Introduction the importance of this problem can hardly be overestimated. 

Consistent with the usual observer design scenario \cite{PBERbook}, we assume that the external signals $v$ and $T_L$ are such that the system~\eqref{im1_flux}--\eqref{im1_mecdyn} is forward complete and all the signals are bounded. Furthermore, we also assume that $v$ and $i$ are absolutely integrable. This assumption is consistent with the motor operation since, in steady-state, these signals are periodic of zero-mean.
%
\section{A LINEAR REGRESSION EQUATION}
\label{sec3}
%
As explained in  \cite{PYRetalscl19} the key step for the application of DREMBAO is to derive a {\em family of parameterized LREs} to which we can apply the DREM procedure to isolate its scalar components, namely \cite[Equation (8)]{PYRetalscl19} and  \cite[Equation (11)]{PYRetalscl19}. 

In the present case, this LRE is identified in the lemma below, whose proof---being quite technical---is given in the Appendix. 

\begin{lemma}
\lab{lem1}
Consider the IM electrical dynamics \eqref{imdyn} with measured signals
$$
y:=\col(i,v).
$$
There exists {\em measurable} signals\footnote{That is, signals that can be {\em computed}, via stable filtering and algebraic operations, from the measured signals $y$---without open-loop integration nor differentiation.}
\begalis{
z_e &:\rea_+ \times \rea_+ \to \rea \\
\phi_e &:\rea_+ \times \rea_+ \to \rea^6,
}
verifying
\begequ
\lab{lre}
z_e(t,\alpha_\ell )=\phi_e^\top (t,\alpha_\ell )\begmat{R_r\\ \lambda(t)\\ R_r\lambda(t) \\ R_r|\lambda(t)|^2} +\et,
\endequ
where $\alpha_\ell >0$ is a designer-chosen parameter and $\et$ is a generic, exponentially decaying signal stemming from some LTI filters initial conditions.
\end{lemma}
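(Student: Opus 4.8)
\noindent\emph{Proof strategy.}
The plan is to realize components \textbf{C1} and \textbf{C2} concretely. The point of departure is the observation that the matrix-valued term $\left({R_r\over L_r}I_2-n_p\calj\omega\right)\lambda$ enters \emph{both} equations of \eqref{imdyn}: solving \eqref{im1_flux} for it and substituting into \eqref{im1_current}, the $R_r\beta^2 i$ contributions cancel and one is left with the strikingly simple flux-current relation
\begequ
\lab{fcrel}
\beta\dot\lambda = v-R_s i-L_s\sigma\dot i,
\endequ
which, remarkably, no longer involves $\omega$ nor $R_r$. I would then pass \eqref{fcrel} through the stable first-order filter $G(p):={1\over p+\alpha_\ell}$, with $\alpha_\ell>0$ the designer parameter of the statement, using ${p\over p+\alpha_\ell}[i]=i-\alpha_\ell G(p)[i]$ so that no differentiation of $i$ is needed; this produces an identity of the form
\begequ
\lab{fcfilt}
\lambda=\psi+\alpha_\ell\,G(p)[\lambda]+\et,
\endequ
where $\psi$ is a \emph{measurable} signal assembled from $G(p)[v]$, $G(p)[i]$ and $i$, and $\et$ is the exponentially decaying transient of the filter.

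Next I would treat the true flux dynamics \eqref{im1_flux}, whose only obstruction to being a regression is the bilinear term $n_p\calj\omega\lambda$ carrying the unmeasured speed. I eliminate $\omega$ in the classical way, pre-multiplying \eqref{im1_flux} by $\lambda^\top$ and invoking $\lambda^\top\calj\lambda=0$, which gives the scalar identity
\begequ
\lab{normrel}
\hal{d\over dt}|\lambda|^2=-{R_r\over L_r}|\lambda|^2+R_r\beta\,\lambda^\top i .
\endequ
Replacing, via \eqref{fcrel}, $\hal{d\over dt}|\lambda|^2$ by ${1\over\beta}\left(v-R_s i-L_s\sigma\dot i\right)^\top\lambda$, filtering with $G(p)$ to kill the remaining derivative, and substituting \eqref{fcfilt} wherever $\lambda$ (hence $|\lambda|^2$) occurs, every surviving term becomes either a measurable signal or one of the six lumped unknowns $R_r$, $\lambda(t)$, $R_r\lambda(t)$, $R_r|\lambda(t)|^2$: the quadratic-in-state contributions organize into $R_r|\lambda|^2$, the cross term $R_r\beta\,\lambda^\top i=\beta\,i^\top(R_r\lambda)$ is linear in $R_r\lambda$ with measurable coefficient, and the pieces inherited from \eqref{fcfilt} supply measurable filtered current/voltage coefficients multiplying $\lambda$ and $R_r$. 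Gathering these coefficients yields the regressor $\phi_e(t,\alpha_\ell)\in\rea^6$ and the scalar $z_e(t,\alpha_\ell)$ of \eqref{lre}, all filter transients being lumped into the single generic term $\et$; letting $\alpha_\ell$ range over $\rea_+$ furnishes the \emph{family} of LREs on which DREM is subsequently applied.

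The hard part is not any single computation but the measurability bookkeeping: one must guarantee that every signal entering $z_e$ and $\phi_e$ is producible from $y=\col(i,v)$ by \emph{stable} filtering and algebra alone --- no open-loop integration, no differentiation. The delicate object is the filtered flux $G(p)[\lambda]$ in \eqref{fcfilt}, which is not itself available; it must be verified that, once it is substituted into \eqref{normrel} and re-filtered, all the non-measurable pieces (including terms quadratic in $G(p)[\lambda]$) can be re-expressed through the six unknowns, perhaps at the price of introducing a few additional filtered copies of $i$ and $v$, so that what remains is \emph{exactly} a linear regression in those quantities. Carrying this out, while tracking how $\alpha_\ell$ and the initial-condition transients propagate so that the total error collapses to one exponentially decaying $\et$, is the careful --- but ultimately routine --- core of the argument, which is why it is deferred to the Appendix.
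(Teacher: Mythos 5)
Your starting points coincide with the paper's: the flux--current relation $\beta\dot\lambda=v-R_si-\sigma L_s\,{di\over dt}$ (the paper's \eqref{im2}) and the speed-free norm dynamics obtained by premultiplying \eqref{im1_flux} by $\lambda^\top$ and using $\lambda^\top\calj\lambda=0$ (the paper's \eqref{xi_dot}); your list of six lumped unknowns is also the right one. The gap lies in the mechanism you propose for converting the filtered norm dynamics into a regression. After applying ${\alpha_\ell\over p+\alpha_\ell}$ you are left with filtered \emph{products}, such as ${\alpha_\ell\over p+\alpha_\ell}[\lambda^\top i]$ and ${\alpha_\ell\over p+\alpha_\ell}[\lambda^\top\dot\lambda]$. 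A filter does not distribute over a product, so the cross term does not become ``$\beta i^\top(R_r\lambda)$ with measurable coefficient'' as you assert, and substituting your filtered-flux identity $\lambda=\psi+\alpha_\ell G(p)[\lambda]$ \emph{inside} the filter merely regenerates a filtered product in the unknown $G(p)[\lambda]$: the recursion never terminates, and the quadratic terms in $G(p)[\lambda]$ it produces are not expressible through $R_r$, $\lambda$, $R_r\lambda$, $R_r|\lambda|^2$.

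The missing ingredient is the Swapping Lemma \cite[Lemma 3.6.5]{SASBOD}, which is exactly what the paper deploys at this point: it rewrites ${\alpha_\ell\over p+\alpha_\ell}[\lambda^\top w]$ as $\lambda^\top{\alpha_\ell\over p+\alpha_\ell}[w]$ plus a correction term driven by $\dot\lambda$, and the whole point of \eqref{im2} is that this $\dot\lambda$ can then be replaced by ${1\over\beta}\left(v-R_si-\sigma L_s{di\over dt}\right)$, so the correction becomes a strictly proper filter acting on products of measurable signals (the ${di\over dt}$ being absorbed by the filter, hence realizable without differentiation). This is how the paper arrives at \eqref{term13} and \eqref{term3}, with $\lambda$ appearing only \emph{outside} the filters, multiplied by measurable coefficients. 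The remaining term ${\alpha_\ell\over p+\alpha_\ell}[|\lambda|^2]$ is then handled not by a filtered-flux identity but by the algebraic identity ${\alpha_\ell\over p+\alpha_\ell}[\xi]=\xi-{1\over p+\alpha_\ell}[\dot\xi]$ combined with the expression already obtained for the filtered derivative (the paper's \eqref{term2}). Without the swapping step, the ``routine bookkeeping'' you defer cannot be completed as described.
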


\begrem
\lab{rem1}
We underscore the presence of {\em products} between the unknown parameter $R_r$ and the state to be reconstructed $\lambda$ in \eqref{lre}, which makes the adaptive observation problem unsolvable with standard techniques \cite{ASTKARORTbook,PBERbook,MARTOMVERbook}. 
\endrem
%
\section{GENERATION OF SCALAR REGRESSIONS}
\label{sec4}
%
In this section we apply the DREM methodology to generate, from the vector LRE \eqref{lre}, six {\em scalar} LRE. In this way we ``isolate" two scalar LRE for the flux $\lambda$ and the unknown parameter $R_r$, that we can easily identify. The result is contained in the lemma below, whose proof is given in the Appendix. 

\begin{lemma}
\lab{lem2}
Consider the family of LREs \eqref{lre}. Fix six different, positive constants $\alpha_\ell,\;\ell=1,\dots,6$. There exists {\em measurable} signals
\begalis{
\zeta_e &:\rea_+ \to \rea^6 \\
\Delta_e &:\rea_+ \to \rea,
}
verifying
\begequ
\lab{scalre}
\zeta_e(t)=\Delta_e(t)\begmat{R_r\\ \lambda(t)\\ R_r\lambda(t) \\ R_r|\lambda(t)|^2} +\et.
\endequ
\end{lemma}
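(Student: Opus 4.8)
The plan is to carry out the two classical steps of the \emph{dynamic regressor extension and mixing} procedure of \cite{ARAetaltac17,PYRetalscl19} on the vector LRE \eqref{lre}. Throughout, abbreviate the unknown vector as $\theta(t):=\col\big(R_r,\lambda(t),R_r\lambda(t),R_r|\lambda(t)|^2\big)\in\rea^6$, so that \eqref{lre} reads $z_e(t,\alpha_\ell)=\phi_e^\top(t,\alpha_\ell)\theta(t)+\et$ for every admissible value of the designer parameter $\alpha_\ell$. Notice that $\dim\theta=6$, which is why exactly six filter values are needed to obtain a square system.

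\emph{Step 1 (regressor extension).} Because \eqref{lre} holds for \emph{each} $\alpha_\ell>0$, fixing six distinct positive constants $\alpha_1,\dots,\alpha_6$ gives six scalar equations that all share the \emph{same} time-varying unknown $\theta(t)$. Stack them as
\begequ
Z_e(t)=\Phi_e(t)\theta(t)+\et,
\endequ
where $Z_e(t):=\col\big(z_e(t,\alpha_1),\dots,z_e(t,\alpha_6)\big)\in\rea^6$ and $\Phi_e(t)\in\rea^{6\times6}$ is the matrix whose $\ell$-th row is $\phi_e^\top(t,\alpha_\ell)$. By Lemma~\ref{lem1} every entry of $Z_e$ and $\Phi_e$ is measurable, being produced from $y$ by the stable filters of Lemma~\ref{lem1} run with a different $\alpha_\ell$; no differentiation or open-loop integration is introduced.

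\emph{Step 2 (mixing).} Left-multiply by the adjugate $\adj\big(\Phi_e(t)\big)$ and use the identity $\adj(A)A=\det(A)I_6$. Setting
\begequ
\zeta_e(t):=\adj\big(\Phi_e(t)\big)Z_e(t),\qquad \Delta_e(t):=\det\big(\Phi_e(t)\big),
\endequ
turns the stacked equation into precisely \eqref{scalre}. Both $\zeta_e$ and $\Delta_e$ are measurable, since the adjugate and the determinant are polynomial functions of the (measurable) entries of $\Phi_e$, i.e.\ obtained by algebraic operations alone. The residual stays a generic exponentially decaying signal: the six initial-condition terms are exponentially decaying, the entries of $\adj(\Phi_e)$ are polynomials in the signals $\phi_e(\cdot,\alpha_\ell)$, which are bounded by the standing boundedness assumption, and a bounded multiple of an exponentially decaying signal is again exponentially decaying, so the six terms coalesce into a single $\et$.

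\emph{Anticipated difficulty.} This is a routine instantiation of DREM, so the only delicate points are bookkeeping: checking that every manipulation remains within ``stable filtering plus algebraic operations'' in the sense of the footnote to Lemma~\ref{lem1}, and that the six distinct exponentially decaying tails combine cleanly after multiplication by the bounded adjugate entries. I emphasize that this lemma does \emph{not} claim $\Delta_e\not\equiv0$; non-degeneracy of the scalar regressor is an excitation issue, addressed later in the observer analysis, and is deliberately absent from the statement here.
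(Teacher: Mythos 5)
Your proposal is correct and follows essentially the same route as the paper's proof: stack the six LREs obtained from the distinct $\alpha_\ell$ into a $6\times 6$ matrix equation and left-multiply by the adjugate, defining $\zeta_e$ and $\Delta_e$ as the mixed regressand and the determinant, respectively. Your additional care with the exponentially decaying residual (boundedness of the adjugate entries) and the explicit remark that no non-degeneracy of $\Delta_e$ is claimed are points the paper leaves implicit, but the argument is the same.
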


\vspace{0.5cm}

\begrem
\lab{rem2}
We underline the fact that $\Delta_e(t)$---defined in \eqref{deltae} in the proof of Lemma \ref{lem2}---is a {\em scalar} signal. Consequently, from the first three elements of \eqref{scalre} we can define three scalar LREs
\begali{
\lab{lre1}
\zeta_{e1}(t)&=\Delta_e(t)R_r +\et\\
\lab{lre23}
\zeta_{e23}(t):=\begmat{\zeta_{e2}(t)\\ \zeta_{e3}(t)}&=\Delta_e(t)\lambda(t) +\et,
}
from which we can estimate $R_r$ and reconstruct $\lambda$ {\em independently}. A task that is carried out in the next two sections.
\endrem 
%
\section{FLUX OBSERVER}
\label{sec5}
%
Our first main result, that is, a globally convergent observer for the flux, is given in the following proposition. To establish the result we need an excitation assumption articulated below.

\begin{assumption}
\lab{ass1}
Assume that the scalar signal $\Delta_e$,  defined in  \eqref{deltae} in the proof of Lemma \ref{lem2}, verifies
$$
\Delta_e(t) \notin \call_2 \quad \Longleftrightarrow\quad\liminf \int_0^t\Delta_e^2(s)ds=\infty.
$$
\end{assumption}

\vspace{0.5cm}

\begin{proposition}
	\label{pro1}
	Consider the model of the IM \eqref{imdyn} and the LRE~\eqref{lre23} with $\Delta_e$ verifying Assumption \ref{ass1}. The flux observer
	\begin{align}\
	\label{eq:flux_observer_chi}
		\dot{\chi}_e &= \frac{1}{\beta}v - \frac{R_s}{\beta} i + \gamma_{\lambda} \Delta_e \Big[\zeta_{e23} +\Big(\frac{\sigma L_s}{\beta}i - \chi_e\Big) \Delta_e\Big]\\
		\hat{\lambda} &= -\frac{\sigma L_s}{\beta}i + \chi_e,
		\label{eq:flux_observer}
	\end{align}
where $\gamma_{\lambda}>0$ is a tuning gain, ensures
$$
\liminf |\lambda(t) - \hat \lambda(t)|=0.
$$

\end{proposition}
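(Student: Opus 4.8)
\emph{Proof (plan).} The strategy is to reduce, via the observation C1, the observer to the certainty--equivalent estimate of a \emph{filtered} version of $\lambda$ whose true dynamics is \emph{known}; the flux error then satisfies a scalar, linear, time--varying equation forced by an exponentially decaying term, and convergence follows from a Lyapunov/comparison argument that exploits Assumption~\ref{ass1}.

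First I would extract the C1 relation from \eqref{imdyn}. Solving \eqref{im1_flux} for $\big(\tfrac{R_r}{L_r}I_2 - n_p\calj\omega\big)\lambda$ and substituting the resulting expression into the cross term of \eqref{im1_current}, the dependence on the unknown $R_r$ and on $\omega$ cancels, leaving $L_s\sigma\,\dot i = -R_s i - \beta\dot\lambda + v$, equivalently
\begin{equation}
\frac{d}{dt}\Big(\lambda + \frac{\sigma L_s}{\beta}\,i\Big) = \frac{1}{\beta}v - \frac{R_s}{\beta}i .
\label{eq:fo-c1}
\end{equation}
Setting $\chi := \lambda + \tfrac{\sigma L_s}{\beta}i$, the right--hand side of \eqref{eq:fo-c1} coincides with the first two terms of \eqref{eq:flux_observer_chi}, and \eqref{eq:flux_observer} is the certainty--equivalent reconstruction $\hat\lambda = \chi_e - \tfrac{\sigma L_s}{\beta}i$; hence the flux error equals the error in the $\chi$--estimate, $\tilde\lambda := \hat\lambda - \lambda = \chi_e - \chi =: \tilde\chi$.

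Next I would write the error dynamics. Subtracting \eqref{eq:fo-c1} from \eqref{eq:flux_observer_chi}, using $\tfrac{\sigma L_s}{\beta}i - \chi_e = -\lambda - \tilde\chi$, and substituting $\zeta_{e23} = \Delta_e\lambda + \et$ from \eqref{lre23} (so the $\Delta_e\lambda$ terms cancel), I obtain the decoupled scalar LTV equation
\begin{equation}
\dot{\tilde\chi} = -\gamma_\lambda\Delta_e^2(t)\,\tilde\chi + \gamma_\lambda\Delta_e(t)\,\et .
\label{eq:fo-err}
\end{equation}
Since $\Delta_e$ is generated from the bounded signal $y$ by stable filtering and static maps, it is bounded, so the forcing $\gamma_\lambda\Delta_e\et$ inherits the exponential decay of $\et$ and is, in particular, square--integrable. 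Taking $V := \tfrac12|\tilde\chi|^2$ and completing squares on the cross term yields $\dot V \le -\gamma_\lambda\Delta_e^2\,V + \tfrac{\gamma_\lambda}{2}|\et|^2$. Arguing by contradiction, suppose $\liminf|\tilde\lambda(t)| = c > 0$; then $V(t) \ge c^2/8$ for all $t$ beyond some $T$, and integrating from $T$,
$$
V(t) - V(T) \le -\frac{\gamma_\lambda c^2}{8}\int_T^t\Delta_e^2(s)\,ds + \frac{\gamma_\lambda}{2}\int_T^t|\et(s)|^2\,ds ;
$$
the first integral tends to $+\infty$ by Assumption~\ref{ass1} (the discarded piece $\int_0^T\Delta_e^2$ being finite), while the second stays bounded because $\et\in\call_2$, so the right--hand side $\to -\infty$, contradicting $V \ge c^2/8 > 0$. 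Hence $\liminf|\lambda(t) - \hat\lambda(t)| = 0$.

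I expect the crux to be this last step: under Assumption~\ref{ass1} the homogeneous part of \eqref{eq:fo-err} is \emph{not} uniformly asymptotically stable, so one cannot invoke an ISS/converging--input estimate --- only the weak non--square--integrability excitation of $\Delta_e$ is available, which is exactly why the conclusion is $\liminf=0$ rather than $\lim=0$, and the contradiction argument is the clean way to extract it. (If, in addition, $\Delta_e$ is PE, the same $V$ gives exponential decay of $\tilde\lambda$.)
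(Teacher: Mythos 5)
Your proposal is correct and follows essentially the same route as the paper: the same C1 identity $\beta\dot\lambda=-R_s i+v-\sigma L_s \tfrac{di}{dt}$ reduces the observer to the scalar error equation $\dot{\tilde\lambda}=-\gamma_\lambda\Delta_e^2\tilde\lambda+\gamma_\lambda\Delta_e\et$, which is exactly the equation the paper derives. The only difference is the final step: the paper neglects $\et$, integrates the homogeneous ODE in closed form as $\tilde\lambda(t)=e^{-\gamma_\lambda\int_0^t\Delta_e^2(s)ds}\tilde\lambda(0)$, and defers the perturbation term to Remark 5 via a cited lemma, whereas you retain $\et$ and close the argument with a self-contained Lyapunov/contradiction estimate --- a minor but legitimate gain in self-containedness.
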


\begin{proof}
	Define the observation error
	\begin{align}
		\tilde{\lambda} := \lambda - \hat{\lambda},
	\end{align}
	using the flux dynamics equation~\eqref{im2}, the observer equations \eqref{eq:flux_observer_chi} and \eqref{eq:flux_observer}, and the LRE\footnote{For ease of presentation we neglect the exponentially decaying term in  \eqref{lre23}. See Remark 5.} \eqref{lre23} we get the error dynamics
	\begin{align*}
		\dot{\tilde{\lambda}}
			& = \dot{\lambda} - \dot{\hat{\lambda}} 
			\nonumber \\
			& = \dot{\lambda} + \frac{\sigma L_s}{\beta}{di\over dt} - \dot{\chi}_e
			\nonumber \\
			& = - \gamma_{\lambda} \Delta_e(\zeta_{e23} - \hat{\lambda} \Delta_e)
			\nonumber \\
			& = - \gamma_{\lambda}\Delta_e^2\tilde{\lambda}.
	\end{align*}
The solution of this scalar differential equation is
\begequ
\lab{soloode}
\tilde{\lambda}(t)= e^{-\gamma_\lambda \int_0^t\Delta_e^2(s)ds}\tilde{\lambda}(0)
\endequ
from which we conclude the proof.
\end{proof}

\begrem
\lab{rem3}
It has been shown in \cite{ORTetaltac20} that the condition $\Delta_e(t) \notin \call_2$ is strictly weaker than PE of the regressor $\phi_e(t)$ of \eqref{lre}.
\endrem

\begrem
\lab{rem4}
In \cite{YIORT} it is established that---generically, {\em i.e.}, for almost all choices of the constants $\alpha_\ell$---if $\phi_e(t)$ is PE then $\Delta_e(t)$ is also PE. It is easy to see that, in that case, there exists positive constants $C_{\lambda}$ and $\rho_{\lambda}$ such that
	$$
	|\tilde \lambda(t)| \leq C_{\lambda} e^{-\rho_{\lambda}t}.
	$$
\endrem

\begrem
\lab{rem5}
Notice that the additive exponentially decaying term $\et$ in \eqref{lre23}, that we neglected in the proof above, appears in the flux observer error equation in the form
$$
\dot{\tilde{\lambda}} = - \gamma_{\lambda}\Delta_e^2\tilde{\lambda}+\gamma_{\lambda}\Delta_e \et.
$$
As shown in \cite[Lemma 1]{ARAetalmicnon15} the presence of the term $\et$ does not affect the result of Proposition \ref{pro1}. Therefore, in the sequel, we neglect the presence of such terms.
\endrem
\section{ROTOR RESISTANCE ESTIMATION}
\label{sec6}
%
In this section the rotor resistance is estimated using the LRE~\eqref{lre1}.
\begin{proposition}
	\label{pro2}
	Consider the LRE~\eqref{lre1} and the parameter update law
	\begin{align}
	\label{eq:rr_est}
	\dot{\hat{R}}_r = \gamma_{r}  \Delta_e \left( \zeta_{e1} - \hat{R}_{r}  \Delta_e \right),
	\end{align}
	where $\gamma_{r} > 0$. If $ \Delta_e$ verifies Assumption \ref{ass1} then
$$
\liminf |\hat R_r(t)-R_r|=0.
$$
Moreover, if $\Delta_e$ is PE then there exists positive constants $C_{R}$ and $\rho_{R}$ such that
	\begin{align}
	|\hat{R}_r(t) - R_{r}| \leq C_{R} e^{-\rho_{R}t}.
	\end{align}
\end{proposition}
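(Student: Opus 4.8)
The plan is to mirror, almost verbatim, the argument used for Proposition \ref{pro1}, since the scalar LRE \eqref{lre1} has exactly the same structure as the flux LRE \eqref{lre23}. First I would introduce the parameter error $\tilde R_r := \hat R_r - R_r$ and differentiate it. Substituting the update law \eqref{eq:rr_est} and replacing $\zeta_{e1}$ by $\Delta_e R_r$ in accordance with \eqref{lre1}—momentarily neglecting $\et$, as justified below—the cross terms cancel and one is left with the autonomous scalar equation $\dot{\tilde R}_r = -\gamma_r \Delta_e^2 \tilde R_r$, whose solution is $\tilde R_r(t) = e^{-\gamma_r \int_0^t \Delta_e^2(s)\,ds}\,\tilde R_r(0)$, exactly as in \eqref{soloode}.

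Under Assumption \ref{ass1} the integral in the exponent diverges, hence $\tilde R_r(t)\to 0$, which is the first claim. For the exponential bound I would invoke the definition of PE: if $\Delta_e$ is PE there exist $T,\mu>0$ with $\int_t^{t+T}\Delta_e^2(s)\,ds \ge \mu$ for all $t\ge 0$. Summing this inequality over consecutive windows of length $T$ and writing $t = nT+r$ with $n=\lfloor t/T\rfloor$ yields the linear lower bound $\int_0^t\Delta_e^2(s)\,ds \ge (\mu/T)\,t - \mu$. Plugging this into the exponential gives $|\tilde R_r(t)| \le e^{\gamma_r\mu}\,|\tilde R_r(0)|\,e^{-(\gamma_r\mu/T)t}$, so it suffices to take $C_R := e^{\gamma_r\mu}|\tilde R_r(0)|$ and $\rho_R := \gamma_r\mu/T$.

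Finally, I would observe that the term $\et$ neglected in \eqref{lre1} enters the error equation additively as $\gamma_r \Delta_e \et$; the very same reasoning invoked in Remark \ref{rem5}, appealing to \cite[Lemma 1]{ARAetalmicnon15}, shows this exponentially decaying perturbation affects neither conclusion. There is no genuine obstacle in this proposition: the only mildly non-routine step is passing from the windowed PE integral inequality to the linear lower bound on $\int_0^t\Delta_e^2$, which is a standard manipulation. The real difficulty of the development lies upstream, in the construction of the measurable signals $\zeta_{e1}$ and $\Delta_e$ carried out in Lemmas \ref{lem1}--\ref{lem2}, and in verifying Assumption \ref{ass1}, not in the convergence argument presented here.
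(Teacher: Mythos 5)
Your proposal is correct and follows essentially the same route as the paper: the paper's proof likewise forms $\tilde R_r$, derives $\dot{\tilde R}_r = -\gamma_r\Delta_e^2\tilde R_r$, and then simply refers back to the arguments of Proposition~\ref{pro1} (the explicit exponential-of-integral solution, Assumption~\ref{ass1} for asymptotic convergence, PE for the exponential bound, and the remark on the neglected $\et$ term). Your write-up merely makes explicit the windowed-PE-to-linear-lower-bound step that the paper leaves implicit.
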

\begin{proof}
	Defining the observation error
	\begin{align}
	\tilde{R}_r = \hat{R}_r - R_r,
	\end{align}
	and substituting~\eqref{eq:rr_est} we obtain
	\begin{align}
	\dot{\tilde{R}}_r
	= - \dot{\hat{R}}_r 
	& = - \gamma_{r} \Delta_e^2(t) \tilde{R}_{r}.
	\label{eq:r_error_dyn}
	\end{align}
The proof is completed using the same arguments of the proof of Proposition \ref{pro1}.
\end{proof}
%
\section{ROTOR SPEED OBSERVER AND LOAD TORQUE ESTIMATION}
\label{sec7}
%
In this section we design an observer for the speed $\omega$ and an estimator for the load torque $T_L$. In the light of Propositions \ref{pro1} ans \ref{pro2}, we apply {\em certainty equivalence} and---assuming $\Delta_e$ verifies Assumption \ref{ass1}---consider that the resistance and the flux are obtained applying these propositions. The constructions are done with a procedure similar to the one used above. Namely, doing first some filtering to obtain a vector LRE in the unknowns $\omega$ and $T_L$. Then, using DREM to derive independent, scalar LREs for $\omega$ and $T_L$.  
\subsection{Derivation of two scalar LREs}
\label{subsec71}
%
\begin{lemma}
\lab{lem3}
Consider the IM electrical~\eqref{imdyn} and mechanical \eqref{im1_mecdyn} dynamics with known flux $\lambda$ and rotor resistance $R_r$. There exists {\em measurable} signals
\begalis{
\zeta_m &:\rea_+  \to \rea^2 \\
\Delta_m &:\rea_+ \to \rea,
}
verifying
\begequ
\lab{lreomet}
\zeta_m(t)=\Delta_m(t)\begmat{T_L\\ \omega(t)} +\et.
\endequ
\end{lemma}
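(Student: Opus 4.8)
The plan is to mimic the construction used for the electrical LRE in Lemmas \ref{lem1}--\ref{lem2}. First I would isolate, in the mechanical dynamics \eqref{im1_mecdyn}, the two unknowns $T_L$ and $\omega$ together with the now-available signals $\lambda$ and $R_r$. The obstacle is that \eqref{im1_mecdyn} contains the derivative $\dot\omega$, which we cannot measure, so the first step is to remove the differentiation by applying a stable, strictly proper LTI filter. Concretely, I would pass \eqref{im1_mecdyn} through a filter of the form ${p \over p+\kappa}$ acting on $\omega$ (so that $\dot\omega$ is replaced by $\kappa\bigl[\omega - {\kappa \over p+\kappa}[\omega]\bigr]$, or more simply ${p\over p+\kappa}[\omega]$ which is a measurable-plus-unknown decomposition), and apply ${1\over p+\kappa}$ to the regressor term $-n_p\beta\lambda^\top\calj i$ and to the constant $T_L$, for which ${1\over p+\kappa}[T_L] = {T_L\over \kappa} + \et$. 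This yields, after collecting terms, a vector regression of the form
\[
y_m(t) = \phi_m^\top(t)\begmat{T_L\\ \omega(t)} + \et,
\]
where $y_m$ and $\phi_m \in \rea^2$ are built by stable filtering from $\lambda$, $i$ and the constant $J$ (known or absorbed), and the residual $\omega$-term ${p\over p+\kappa}[\omega]$ is handled either by treating $\omega(t)$ as the instantaneous unknown after a further algebraic manipulation, or — following the device already used for the flux — by augmenting the regressor so that the relation is exact up to $\et$. The key point, inherited from component {\bf C1} in the Introduction, is that no open-loop integration and no differentiation are needed: only $p/(p+\kappa)$ and $1/(p+\kappa)$ type operators act on measured or already-reconstructed signals.

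Next, to pass from this vector LRE to the two scalar LREs announced in \eqref{lreomet}, I would apply the DREM procedure exactly as in the proof of Lemma \ref{lem2}: introduce two (or more) stable LTI operators $H_j(p)$, $j=1,2$, with distinct parameters, to generate an extended, square regression
\[
Y_m(t) = \Phi_m(t)\begmat{T_L\\ \omega(t)} + \et,
\]
with $Y_m \in \rea^2$, $\Phi_m \in \rea^{2\times 2}$, then multiply by the adjugate of $\Phi_m$ to obtain
\[
\zeta_m(t) := \adj\{\Phi_m(t)\}\, Y_m(t), \qquad \Delta_m(t) := \det\{\Phi_m(t)\},
\]
which satisfy precisely \eqref{lreomet}, with $\Delta_m$ scalar as claimed and the exponential term $\et$ preserved through all the LTI filtering because such terms are invariant (up to another $\et$) under stable filtering.

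The main obstacle I anticipate is the same one flagged in Remark \ref{rem1} for the electrical side: the mechanical equation \eqref{im1_mecdyn} is bilinear in $\omega$ through the cross term, but here the bilinearity is benign because $\lambda^\top\calj i$ does not involve $\omega$ — the genuine difficulty is instead the filtering of $\dot\omega$, i.e. ensuring that after replacing $\dot\omega$ we still have a regression that is {\em linear in $\omega(t)$ evaluated at the current time} rather than in a filtered version of it. This is typically circumvented by writing $\omega = {\kappa\over p+\kappa}[\omega] + {p\over p+\kappa}[\omega]$ and recognizing that the filtered-derivative term ${p\over p+\kappa}[\omega]$ can itself be re-expressed, via \eqref{im1_mecdyn}, in terms of ${1\over p+\kappa}$ acting on the (measurable) regressor and on $T_L$; this closes the loop and produces an exact LRE in $(T_L,\omega(t))$ up to $\et$. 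Once the vector LRE is in hand, the DREM step and the verification of \eqref{lreomet} are routine, and I would relegate the detailed filter bookkeeping to the Appendix, as is done for Lemmas \ref{lem1} and \ref{lem2}.
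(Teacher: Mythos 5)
There is a genuine gap in your plan: you try to build the regression from the mechanical equation \eqref{im1_mecdyn} alone, but in that equation $\omega$ enters \emph{only through its derivative} $\dot\omega$. After filtering you obtain an expression for ${p\over p+\kappa}[\omega]=\omega-{\kappa\over p+\kappa}[\omega]$ in terms of measurable signals and $T_L$; the leftover term ${\kappa\over p+\kappa}[\omega]$ is a filtered history of $\omega$ that is an independent unknown signal, and re-substituting \eqref{im1_mecdyn} only gives you $\dot\omega$ again, never $\omega$ itself. Closing that loop would require recovering $\omega$ from $\dot\omega$, i.e.\ exactly the open-loop integration the paper rules out as inadmissible. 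Consequently your ``vector LRE'' in $(T_L,\omega(t))$ is never actually obtained, and the subsequent DREM step has nothing to act on. The structural fact you are missing is that the instantaneous value $\omega(t)$ is algebraically observable only through the \emph{flux} equation \eqref{im1_flux}, where it appears multiplied by $n_p\calj\lambda$ --- a quantity that is now known by hypothesis.

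The paper's proof exploits precisely this: it writes \eqref{im1_flux} as $\dot\lambda+\eta_1=\eta_2\,\omega$ with $\eta_1:={R_r\over L_r}\lambda-R_r\beta i$ and $\eta_2:=n_p\calj\lambda$ both measurable, applies ${a\over p+a}$ and the Swapping Lemma so that $\omega(t)$ appears at the current time multiplied by the measurable regressor ${a\over p+a}[\eta_2]$, and then uses the mechanical equation \emph{only} to replace $\dot\omega$ inside the swapping residual --- which is exactly where $T_L$ enters, with measurable coefficient ${1\over J}{a\over(p+a)^2}[\eta_2]$. Since the flux equation is two-dimensional, this immediately yields a $2\times2$ regression matrix $\Phi_m$ for the two unknowns $\col(T_L,\omega)$, so no additional dynamic extension with several filters $H_j(p)$ is needed; one simply takes $\zeta_m=\adj\{\Phi_m\}z_m$ and $\Delta_m=\det\{\Phi_m\}$. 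Your DREM/adjugate step at the end is the right closing move, but it must be preceded by the flux-equation parameterization, not by filtering of the torque balance.
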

\begin{proof}
The first step in the proof is to derive a vector LRE for $\col(T_L,\omega)$. For, consider the equations of the IM~model~\eqref{im1_flux} and \eqref{im1_mecdyn}, that we rewrite as
\begali{
\lab{dotlam}
	\dot\lambda +\eta_1
	& = \eta_2\,\omega,
	\\
\label{jdotome}
	J\dot\omega &=\beta \eta_2^\top i-T_L,
}
where we defined the two-dimensional, measurable signals
\begin{align*}
\eta_1 & := \frac{R_r}{L_r}\lambda - R_r\beta i,
\\
\eta_2 & := n_p \calj \lambda,
\end{align*}
Applying the filter $\frac{a}{p+a}$, with $a > 0$, and the Swapping Lemma to \eqref{dotlam} we obtain
\begalis{
&\frac{ap}{p+a}[\lambda]+\frac{a}{p+a}[\eta_1]=\frac{a}{p+a}[\eta_2 \omega]\\
&=\frac{a}{p+a}[\eta_2]\omega-\frac{1}{p+a}\Big[ \dot \omega\frac{a}{p+a}[\eta_2]\Big].
}
Using \eqref{jdotome} to replace $\dot \omega$ in the right hand side term above yields the matrix LRE
\begin{align}
\label{eq:reg2}
z_m(t) =  \Phi_{m}(t)\begmat{T_{L}\\  \omega(t)}+\et,
\end{align}
where we defined the measurable signals
\begin{align*}
z_m & :=\frac{a p}{p+a} [\lambda]+ \frac{a}{p+a}[\eta_1]+ \frac{\beta }{J}\frac{1}{p+a} \left[\eta_2^{\top} i \frac{a}{p + a} [\eta_2] \right]
\\
\Phi_{m} & := \begmat{ \frac{1}{J}\frac{a}{(p +a)^2}[\eta_2] & | & \frac{a}{p+a}[\eta_2]}.
\end{align*}

Now, we apply DREM to \eqref{eq:reg2} {and} obtain \eqref{lreomet} with the definitions
\begin{align}
\nonumber
\zeta_m & := \adj\{\Phi_m\}z_m\\
\lab{delm}
\Delta_m & := \det\{\Phi_m\}.
\end{align}
completing the proof.
\end{proof}

\begrem
\lab{rem6}
Notice that the matrix $\Phi_m$ may be written as
\begequ
\lab{phim}
\Phi_m=\frac{a}{p+a}\begmat{\frac{1}{J(p +a)}[\calj \lambda] & | & \calj \lambda }
\endequ
underscoring the critical role of the flux vector in the excitation requirement. See Corollary \ref{cor1} below.
\endrem
\subsection{Estimation of $T_L$ and observation of $\omega$}
\label{subsec72}
%
To design these estimators we need an additional excitation assumption articulated below.

\begin{assumption}
\lab{ass2}
The scalar signal $\Delta_m$ defined in \eqref{delm} verifies
$$
\Delta_m(t) \notin \call_2 \quad \Longleftrightarrow\quad\liminf \int_0^t\Delta_m^2(s)ds=\infty.
$$
\end{assumption}
\begin{proposition}
	\label{pro3}
	Consider the IM electrical~\eqref{imdyn} and mechanical \eqref{im1_mecdyn} dynamics with known flux $\lambda$ and rotor resistance $R_r$, with $\Delta_m$ defined in \eqref{delm} verifying Assumption \ref{ass2}. The parameter estimator
\begin{align}
\dot{\hat{T}}_L = \gamma_{T} {\Delta}_m \left[ \zeta_{m1} - \hat{T}_{T} {\Delta}_m \right],
\end{align}
together with the speed observer
\begin{align*}
\dot{\hat \omega} &= -\frac{1}{J}\hat{T}_L-\frac{n_p \beta }{J}\lambda^{\top} \calj i+\gamma_{\omega} \hat{\Delta}_m \left( \zeta_{m2} -  \hat \omega {\Delta}_m \right)
\end{align*}
where $\gamma_{T} > 0$ and $\gamma_{\omega} > 0$ are tuning parameters, ensures
\begalis{
\liminf |\hat T_L(t)-T_L| & =0\\
\liminf |\hat \omega(t)-\omega(t)| & =0
}
Moreover, if $\Phi_m(t)$ is PE the convergences are exponential. 
\end{proposition}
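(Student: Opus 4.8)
The plan is to mirror the structure used in the proofs of Propositions \ref{pro1} and \ref{pro2}, now applied to the scalar LREs extracted from \eqref{lreomet}. Since $\Delta_m$ is scalar, the first two components of \eqref{lreomet} read $\zeta_{m1}(t)=\Delta_m(t)T_L+\et$ and $\zeta_{m2}(t)=\Delta_m(t)\omega(t)+\et$, exactly analogous to \eqref{lre1} and \eqref{lre23}. First I would treat the load-torque estimator. Defining the error $\tilde T_L:=\hat T_L-T_L$, substituting the update law, neglecting the exponentially decaying term by the argument of Remark \ref{rem5} (invoking \cite[Lemma 1]{ARAetalmicnon15}), and using that $T_L$ is constant so $\dot{\tilde T}_L=\dot{\hat T}_L$, one obtains $\dot{\tilde T}_L=-\gamma_T\Delta_m^2\tilde T_L$. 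This is the same scalar linear time-varying equation as \eqref{eq:r_error_dyn}, whose solution is $\tilde T_L(t)=e^{-\gamma_T\int_0^t\Delta_m^2(s)ds}\tilde T_L(0)$; Assumption \ref{ass2} ($\Delta_m\notin\call_2$) then forces $\liminf|\tilde T_L(t)|=0$, and PE of $\Phi_m$ (equivalently, via \cite{YIORT}, of $\Delta_m$) yields the exponential bound.

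Next I would handle the speed observer, which is slightly more delicate because $\omega$ is time-varying and because $\hat T_L$ (not the true $T_L$) enters the observer. Define $\tilde\omega:=\hat\omega-\omega$. Subtracting the mechanical dynamics \eqref{im1_mecdyn}, written as $\dot\omega=-\tfrac{1}{J}T_L-\tfrac{n_p\beta}{J}\lambda^\top\calj i$, from the observer equation, the $-\tfrac{n_p\beta}{J}\lambda^\top\calj i$ terms cancel and one gets
\begin{align*}
\dot{\tilde\omega}=-\gamma_\omega\Delta_m^2\tilde\omega-\frac{1}{J}\tilde T_L+\gamma_\omega\Delta_m\et,
\end{align*}
using $\zeta_{m2}=\Delta_m\omega+\et$. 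Thus $\tilde\omega$ obeys a scalar LTV equation with the same homogeneous part $-\gamma_\omega\Delta_m^2$ as before, now driven by the perturbation $-\tfrac{1}{J}\tilde T_L$ which, from the first part of the proof, decays to zero (exponentially in the PE case, and in general belongs to $\call_2$ since $\tilde T_L^2(t)=e^{-2\gamma_T\int_0^t\Delta_m^2}\tilde T_L^2(0)$ and $\Delta_m\notin\call_2$). I would then appeal once more to \cite[Lemma 1]{ARAetalmicnon15}, or to a standard input-to-state argument for the scalar equation $\dot{\tilde\omega}=-\gamma_\omega\Delta_m^2\tilde\omega+\varepsilon(t)$ with $\Delta_m\notin\call_2$ and $\varepsilon$ vanishing, to conclude $\liminf|\tilde\omega(t)|=0$. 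When $\Phi_m$ is PE, both $\Delta_m^2$ is uniformly bounded below in the integral sense and $\tilde T_L$ decays exponentially, so a routine convolution estimate gives $|\tilde\omega(t)|\le C_\omega e^{-\rho_\omega t}$ for suitable positive constants.

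The main obstacle is the cascade in the speed-observer error equation: unlike Propositions \ref{pro1} and \ref{pro2}, the error dynamics for $\tilde\omega$ is not autonomous but forced by $\tilde T_L$, so one cannot simply write down a closed-form solution and read off the conclusion. The care needed is to argue that a vanishing (indeed $\call_2$) perturbation entering a scalar system whose homogeneous part is only asymptotically—not uniformly exponentially—stable still yields a vanishing state; this is precisely the content of \cite[Lemma 1]{ARAetalmicnon15}, so the rigorous step is to verify its hypotheses here (boundedness of $\Delta_m$, which holds since all signals are bounded by assumption, and the $\call_2$ or decay property of the perturbation). Everything else is a direct transcription of the earlier arguments.
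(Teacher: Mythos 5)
Your proposal is correct and follows essentially the same route as the paper: the paper's proof consists precisely of deriving the error equations $\dot{\tilde T}_L=-\gamma_T\Delta_m^2\tilde T_L$ and $\dot{\tilde\omega}=-\gamma_\omega\Delta_m^2\tilde\omega-\tfrac{1}{J}\tilde T_L$ and then invoking the arguments of Proposition~\ref{pro1} together with ``standard cascaded systems analysis,'' so you have reproduced its decomposition and in fact supplied more detail on the cascade step than the paper does. One minor caveat: your parenthetical claim that $\tilde T_L\in\call_2$ follows from $\Delta_m\notin\call_2$ is not justified (if $\int_0^t\Delta_m^2(s)\,ds$ grows like $\log\log t$ the exponential factor need not be square-integrable), but this is harmless since your main argument uses only that $\tilde T_L$ vanishes, which does follow from Assumption~\ref{ass2}.
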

\begin{proof}
The proof is established invoking \eqref{lreomet} and verifying that the error equations take the forms
\begalis{
\dot{\tilde{T}}_L
	& = - \gamma_{T} \Delta_m^2 \tilde{T}_{L}.\\
	\dot{\tilde{\omega}}
	& = - \gamma_{\omega} \Delta_m^2 \tilde \omega-\frac{1}{J}\tilde{T}_L,
}
and using arguments similar to the once used in the proof of Proposition \ref{pro1} and standard cascaded systems analysis.
\end{proof}

The following corollary proves that, in steady state, $\Delta_m$ is PE, ensuring exponential convergence of the estimates of $T_L$ and $\omega$.  

\begin{corollary}
\lab{cor1}
Consider the IM electrical~\eqref{imdyn} and mechanical \eqref{im1_mecdyn} dynamics operating in {\em steady-state}. The adaptive observer of Proposition \ref{pro3} is {\em exponentially} convergent.
\end{corollary}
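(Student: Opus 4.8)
The plan is to upgrade the conclusion of Proposition~\ref{pro3} by proving that, in steady state, the scalar signal $\Delta_m$ of \eqref{delm} is not merely non-$\call_2$ but genuinely \emph{persistently exciting}. The corollary then follows at once: the error equations exhibited in the proof of Proposition~\ref{pro3}, namely $\dot{\tilde T}_L=-\gamma_T\Delta_m^2\tilde T_L$ feeding the cascade $\dot{\tilde\omega}=-\gamma_\omega\Delta_m^2\tilde\omega-\tfrac1J\tilde T_L$, are exponentially stable whenever $\Delta_m^2$ admits a uniform positive lower bound on windows of fixed length; alternatively one may invoke the PE clause of Proposition~\ref{pro3} directly once $\Phi_m$ is shown PE (a property equivalent, generically in the $\alpha_\ell$, to PE of its determinant).

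First I would recall the classical steady-state analysis of the voltage-fed IM: under a balanced sinusoidal supply of electrical frequency $\omega_s>0$ the rotor flux is, modulo the exponentially decaying filter transients already absorbed into $\et$, a rotating vector of constant non-zero magnitude, $\lambda(t)=\Lambda\,\col(\cos\omega_s t,\sin\omega_s t)$ with $\Lambda>0$, so that $\calj\lambda$ inherits these properties. Then I would use the representation \eqref{phim} of $\Phi_m$ from Remark~\ref{rem6}: its two columns are the images of $\calj\lambda$ under the stable filters $\tfrac{a}{p+a}$ and $\tfrac{1}{J}\tfrac{a}{(p+a)^2}$, respectively. Applied to a rotating vector of fixed frequency $\omega_s$, each such filter returns the same vector scaled by a positive constant and rotated by the phase lag of its transfer function at $j\omega_s$; since $\tfrac{1}{J}\tfrac{a}{(p+a)^2}=\tfrac{1}{Ja}\big(\tfrac{a}{p+a}\big)^2$, the two phase lags are $\varphi_1:=\arctan(\omega_s/a)\in(0,\tfrac\pi2)$ and $2\varphi_1$, with relative angle $\varphi_1\neq0$. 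Because the determinant of two coplanar rotating vectors of equal frequency equals the product of their norms times the sine of their relative angle, I obtain
$$
\Delta_m(t)=\det\Phi_m(t)=\bar\Delta+\et,\qquad \bar\Delta=\text{const}\neq 0,
$$
with $\bar\Delta$ proportional to $\Lambda^2\sin\varphi_1$, hence non-zero for every $\omega_s>0$ and $\Lambda>0$.

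From here the conclusion is immediate: $\Delta_m$ converges exponentially to the non-zero constant $\bar\Delta$, so there exist $T>0$ and $\delta>0$ with $\int_t^{t+T}\Delta_m^2(s)\,ds\ge\delta$ for all $t\ge0$, i.e.\ $\Delta_m$ is PE; in particular Assumption~\ref{ass2} holds. Substituting this lower bound into the error dynamics of the proof of Proposition~\ref{pro3} yields exponential decay of $\tilde T_L$, and then of $\tilde\omega$ by the standard cascade argument, which establishes the corollary.

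I expect the only delicate point to be the case of a \emph{non-sinusoidal} periodic steady state (periodic but non-harmonic supply): there $\lambda$ is still periodic but $\Delta_m$ is periodic rather than constant, and one must exclude $\Delta_m\equiv0$. This reduces to showing the two columns of $\Phi_m$ cannot be pointwise proportional with \emph{fixed} coefficients, which holds because they are related by the first-order low-pass $\tfrac{1}{J(p+a)}$, whose frequency-dependent phase shift prevents any non-constant periodic vector signal from being mapped to a scalar multiple of itself; equivalently, $\Delta_m$ has non-zero mean because its fundamental-harmonic contribution is exactly the non-zero constant computed above. The sole exception, $\omega_s=0$ (pure DC excitation, $\varphi_1=0$), is not a meaningful operating regime for the machine and is excluded by the hypothesis that the motor is running in steady state.
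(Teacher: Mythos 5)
Your proposal is correct and follows essentially the same route as the paper: both substitute the steady-state rotating flux into the representation \eqref{phim} of $\Phi_m$, use the fact that each stable filter maps the rotating vector to a scaled, phase-shifted copy of itself, and conclude that $\Delta_m$ settles to a constant proportional to the sine of the relative phase lag of the filter $\frac{1}{p+a}$ at the flux rotation frequency, hence is nonzero and PE. The paper identifies that frequency explicitly as $\varpi=\omega^\star+R_rT_L/|\lambda^\star|^2$; your additional remarks on non-harmonic periodic steady states and the degenerate DC case go beyond what the paper proves but do not change the argument.
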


\begin{proof}
The steady-state time-varying operation of the IM with sinusoidal voltages is given by \cite[equation (1.47)]{MARTOMVERbook}
\begalis{
\lambda^\star &=|\lambda^\star |\begmat{\cos(\rho^\star )\\ \sin(\rho^\star )}\\
\dot \rho^\star &=\omega^\star +{R_r T_L \over |\lambda^\star |^2}\\
\rho^\star(0) &=\arctan\Big\{{\lambda_2(0) \over \lambda_1(0)}\Big\},
}
where $(\cdot)^\star $ denotes their reference value. Replacing these values in the matrix $\Phi_m$ given in \eqref{phim}, computing the steady-state values of the filter outputs and using some simple trigonometric identities yields
$$
\Delta_m^\star =-{|\lambda^\star | \over J}\sin(\psi^\star ),
$$
where $\psi^\star $ is the phase shift of the filter ${1 \over j\varpi+a}$ at the frequency 
$$
\varpi=\omega^\star +{R_r T_L \over |\lambda^\star |^2}.
$$
This completes the proof. 
\end{proof}

\begrem
\lab{rem7}
Clearly, for the adaptive implementation of the algorithms of Proposition \ref{pro3} we replace $\lambda$ and $R_r$ by their estimates generated as indicated in Propositions \ref{pro1} and \ref{pro2}, respectively. Due to the complicated algebraic operations involved in the derivation of the LRE \eqref{lre} the mathematical analysis of this implementation of the adaptive observer---without the certainty equivalent assumption---is a daunting task.

\endrem
%
\section{SIMULATION RESULTS}
\label{sec8}
%
The proposed observers and estimators have been tested via numerical simulations in the open loop. The IM is driven by the standard full-state measurement field-oriented control~\cite{leonhard2001control} independently on the flux estimates
\begin{equation}
	v_{dq}=K_p(i_{dq}^\text{ref}-i_{dq})+K_i\int_0^t(i_{dq}^\text{ref}-i_{dq})d\tau,
\end{equation}
where $(\cdot)_{dq}=e^{-J  \delta}(\cdot)_{ab}$,
\begin{align}
	\delta:=\arctan\left(\frac{{\lambda}_b}{{\lambda}_a}\right),
\end{align}
and the current references are generated as
\begin{align}
	i_{d}^\text{ref}&=\frac{1}{M}|{\lambda}_{ab}|+\frac{L_r}{\hat{R}_rM}\left(K_{\lambda p}e_{\lambda}+K_{\lambda i}\int_0^t e_{\lambda}d\tau\right),
	\\
	i_{q}^\text{ref} &=\frac{J_mL_r}{M|{\lambda}_{ab}|}\left(K_{\omega p}e_{\omega}+K_{\omega i}\int_0^t e_{\omega}d\tau\right)
\end{align}
with the error signals 
\begin{align}
\lab{lamref}
	e_{\lambda} &:= |\lambda_{ab}|^\text{ref}-|\lambda_{ab}|,
	\\
	\lab{omref}
	e_{\omega} & = \omega^\text{ref}-\omega,
\end{align}
and the six controller gains $K_{(\cdot)} >0$ are given below.

\begin{figure*}[htp]
	\centering
	\subcaptionbox{\label{fig:lambda_err} Flux estimation error}{\includegraphics[width=0.49\textwidth]{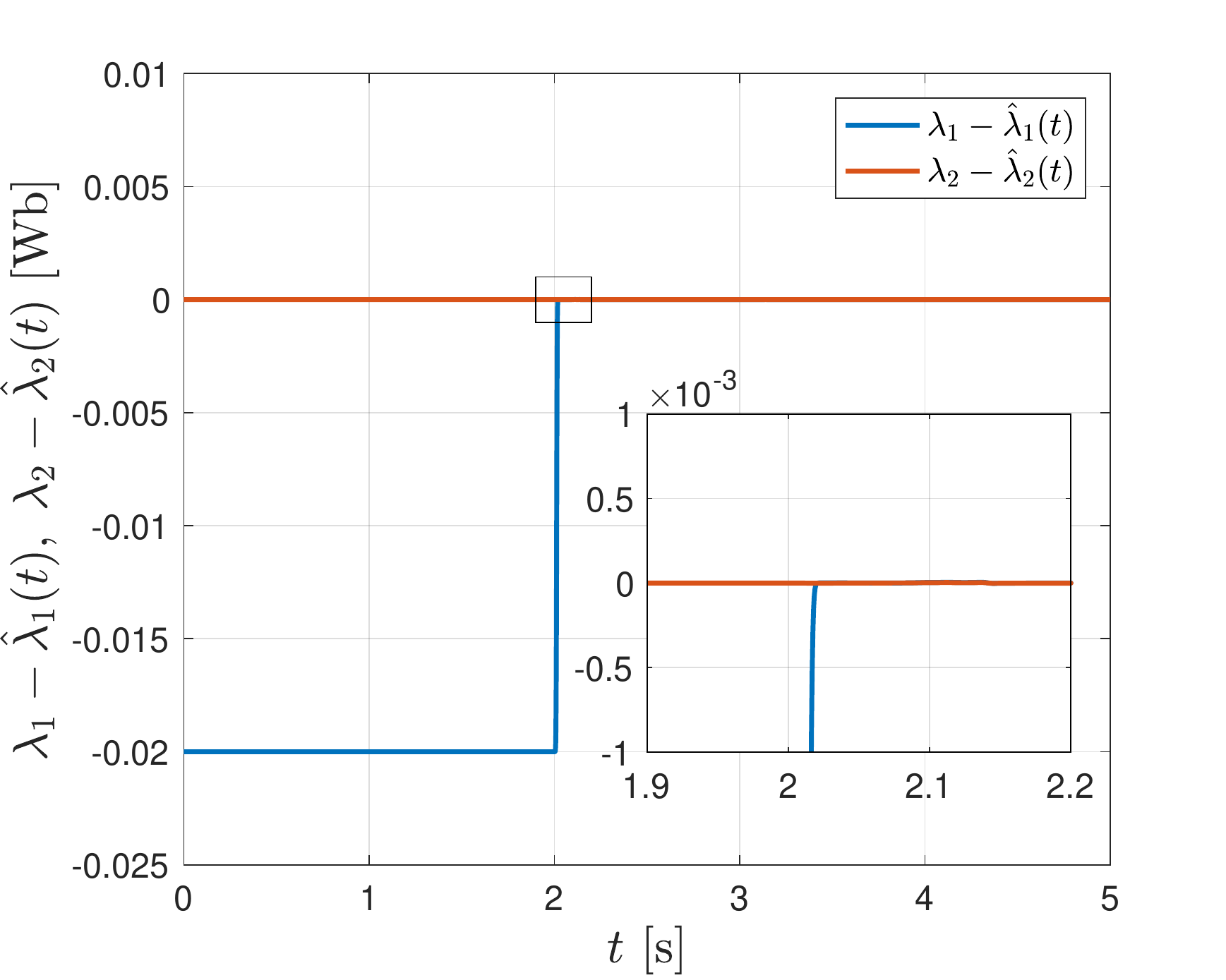}}
	\
	\subcaptionbox{\label{fig:rr} $R_r$, estimate, and error}{\includegraphics[width=0.49\textwidth]{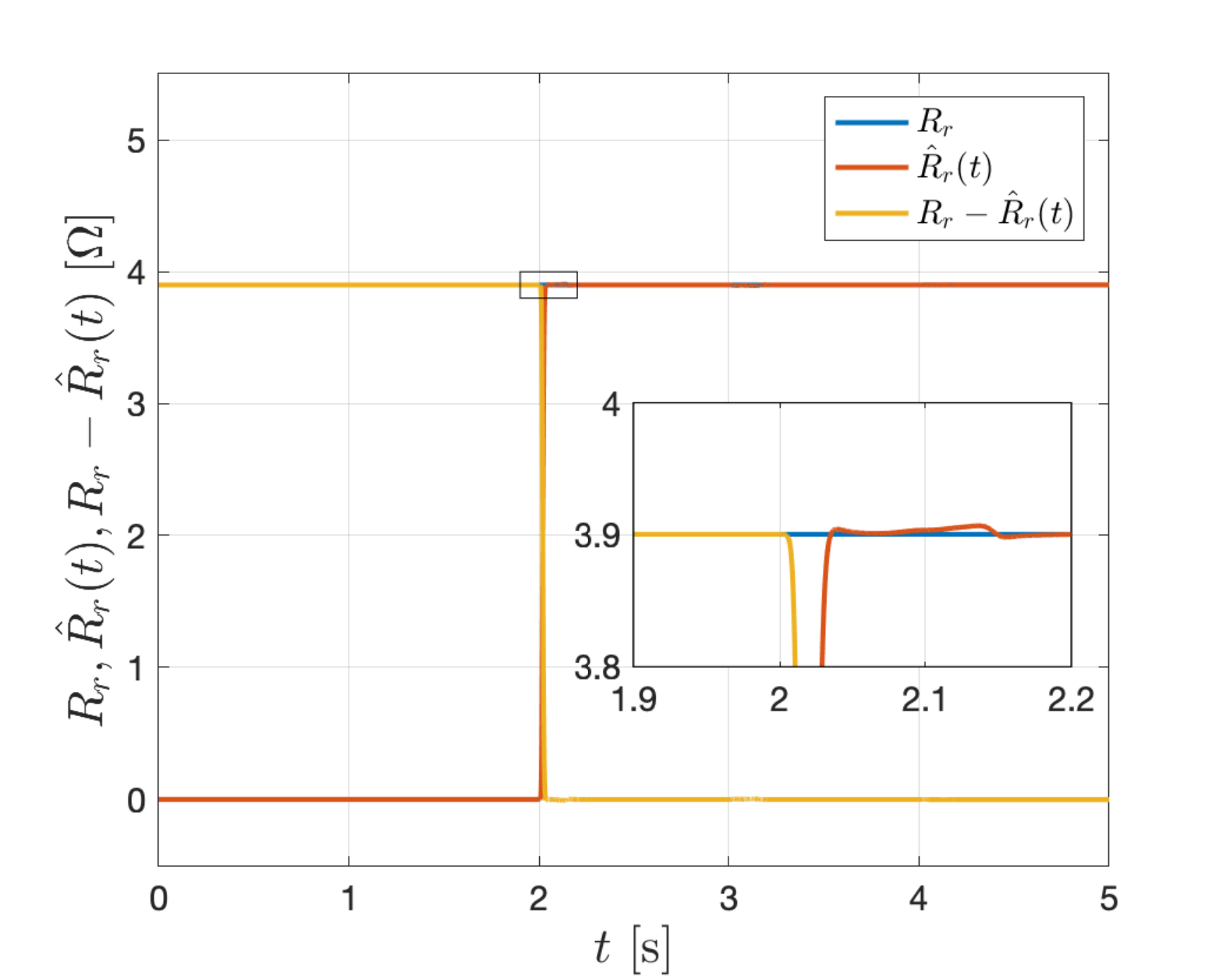}}
	\vspace{-2mm}
	\caption{\label{fig:observer_lambda_R} Transitions for flux observer and $R_r$ estimator }
\end{figure*}

The IM parameters are same as in~\cite{BAZetalwc17}, namely, $L_s = 140$~mH, $L_r = 140$~mH, $M = 117$~mH, $R_s = 1.7\,\mathrm{\Omega}$, $R_r = 3.9\,\mathrm{\Omega}$, $J= 0.00011$ $\mathrm{kg\,m^2}$. The amplitude of the rotor flux reference in \eqref{lamref} was chosen $ |\lambda_{ab}|^\text{ref} =0.0455$~Wb. The speed reference in \eqref{omref} was chosen $ \omega^\text{ref}=40$~rad/s (see Fig.~\ref{fig:omega}), with the following initial conditions $\theta(0) = -3$~rad, $\omega(0) = 0$~rpm, $\lambda(0) = (0.02,0)$~Wb. The controller tuning gains were selected as $K_p =100$, $K_i =100$, $K_{\lambda p} =10$, $K_{\lambda i} =100$, $K_{\omega p} = 10$ and $K_{\omega i} = 10$.

The observers parameters were chosen as: $\gamma_d = 1000$, $\alpha_1 = 10$, $\alpha_2 = 20$, $\alpha_3 = 30$, $\alpha_4 = 40$, $\alpha_5 = 50$, $\alpha_6 = 100$, $\gamma_{\lambda}=0.001$, $\gamma_{r}=0.0001$, $\gamma_{\omega}=10^6$, $\gamma_{T_{L}}=10^6$. The observer starts working after two seconds, with the observer input signals set to zero before that time.

Fig.~\ref{fig:lambda_err} shows the behavior of the flux observer, while the one of the rotor resistance estimator is depicted in Fig.~\ref{fig:rr}. As shown in the figures the performance of the observer and the estimator is remarkable. In Fig.~\ref{fig:omega} the actual rotor speed, its estimate and estimation error are depicted, while the ones of the external load are shown in Fig.~\ref{fig:taul}. Similarly to the previous remark, the quality of the transients is excellent. 

\begin{figure*}[htp]
	\centering
	\subcaptionbox{\label{fig:omega} Transitions for speed estimate}{\includegraphics[width=0.49\textwidth]{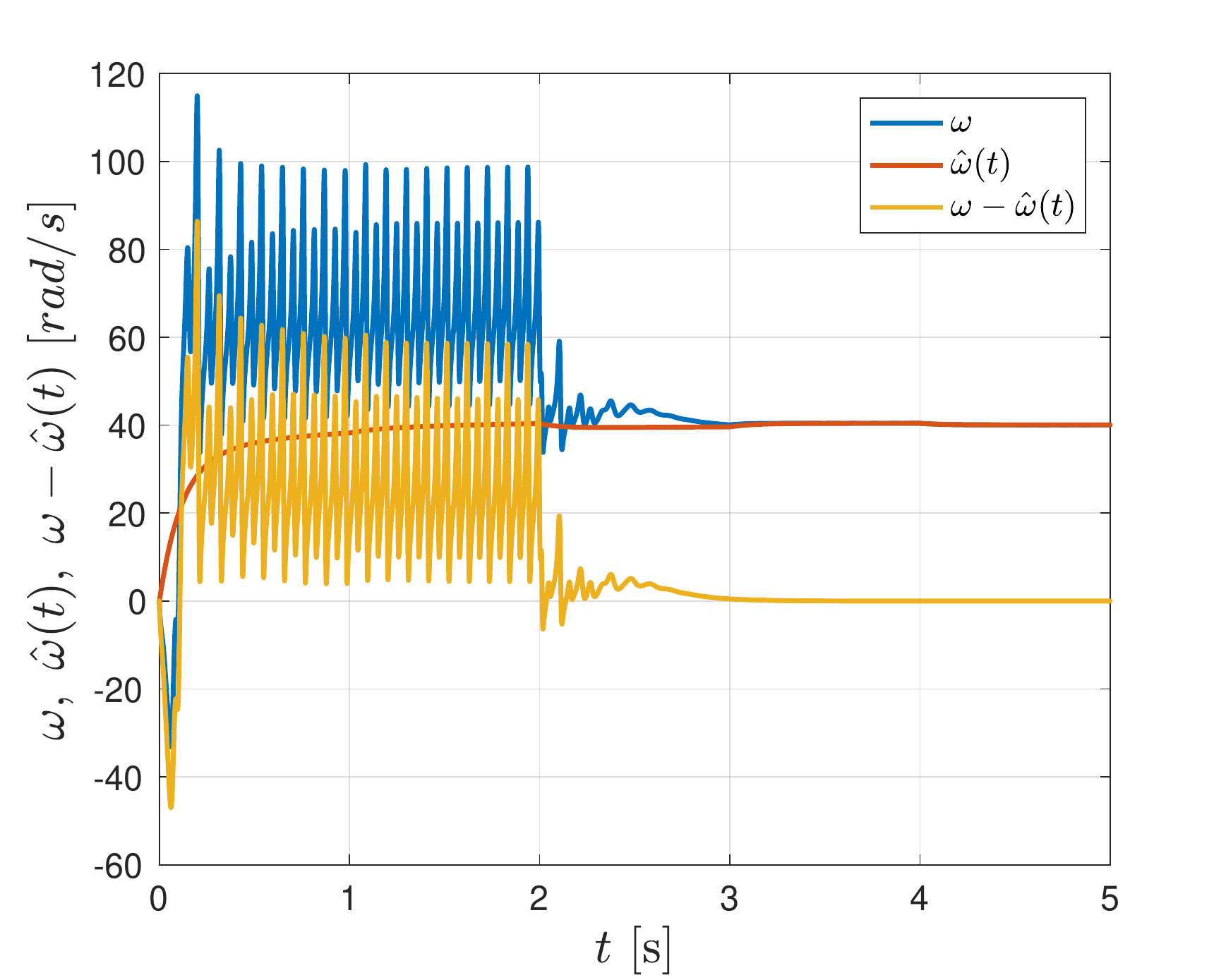}}
	\
	\subcaptionbox{\label{fig:taul} Transitions for external load torque estimate}{\includegraphics[width=0.49\textwidth]{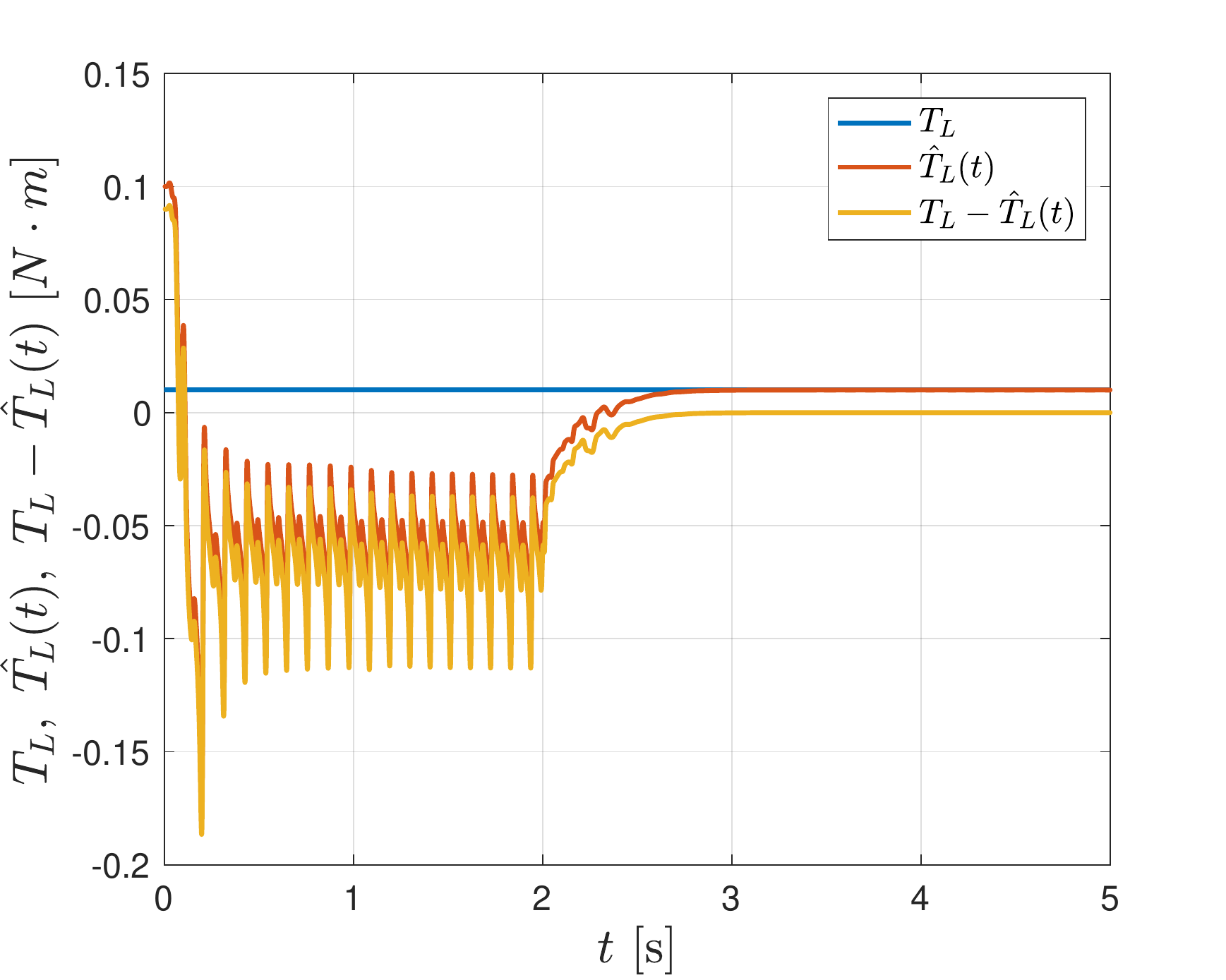}}	
	\vspace{-2mm}
	\caption{\label{fig:observer2} Transitions for external load torque and speed estimates }
\end{figure*}

\section{Conclusions}
\label{sec9}
The paper is devoted to the problem of adaptive observer design for IM. It is assumed that only stator voltages and currents are available for measurement and that all parameters of the IM except rotor resistance and load torque are known.  The solution to this problem is obtained applying the recent results on DREMBAO reported in \cite{PYRetalscl19}. The algorithms convergence relies on the verification of an excitation conditions---Assumptions \ref{ass1} and \ref{ass2}---which has been shown in \cite{ORTetaltac20} to be strictly weaker than PE of the regressor. 

Current research is underway to sharpen the aforementioned excitation conditions, with the hope of providing admissible verifiable operation modes of the IM. Another, very challenging, topic of interest is the application of the adaptive observers in closed-loop operation. Also, some preliminary results on adaptive observers with finite convergence time, with a very weak {\em interval excitation} assumption has been designed in \cite{OVCetal}.  Finally, we are currently working on the practical implementation of the proposed algorithms in an {\em experimental} benchmark.  We hope to be able to report these new results in the near future.


%
\appendix 
\section*{Proof of Lemma \ref{lem1}}
\label{app1}
The first step to prove the lemma is the key observation made in \cite[Subsection 1.2]{MARTOMVERbook} that the derivatives of the flux and the current are related via
\begin{align}
\beta \dot \lambda &=-{R_s} i + v-{\sigma L_s}{di\over dt},
\label{im2}
\end{align}
which follows directly from \eqref{im1_flux} and \eqref{im1_current}.

Let us introduce the following partial change of coordinates
\begin{align}
\lab{xi}
\xi=|\lambda|^2
\end{align}
and consider its derivative 
\begin{align}
\lab{xi_dot}
\dot\xi &= 2\lambda^\top\dot\lambda\nonumber\\
&=2\lambda^\top\left[  -\left({R_r\over L_r}I_2-n_p \calj\omega\right)\lambda+R_r\beta i\right]\nonumber\\
&= -2{R_r\over L_r}\xi+2 R_r\beta\lambda^\top i.
\end{align}

Now, consider a set of stable, linear time-invariant (LTI) filters ${\alpha_\ell \over p+\alpha_\ell }$, parameterized by the constant $\alpha_\ell >0$, where $p:={d\over dt}$. Applying this filters to the model \eqref{xi_dot} we get
\begin{align}
\lab{reg1}
{\alpha_\ell \over p+\alpha_\ell }[ \dot \xi] = -2 {R_r\over L_r} {\alpha_\ell \over p+\alpha_\ell }[\xi] + 2 R_r\beta{\alpha_\ell \over p+\alpha_\ell }[\lambda^\top i].
\end{align}

We will prove now that \eqref{reg1} may be represented as \eqref{lre}. Towards this end, we utilize the Swapping Lemma \cite[Lemma 3.6.5]{SASBOD} and some algebraic operations with the model equations \eqref{imdyn}.

Transform each term separately using the Swapping Lemma
\begin{align}
{\alpha_\ell\over p+\alpha_\ell} [\dot \xi] &=2{\alpha_\ell\over p+\alpha_\ell} [\lambda^\top\dot\lambda]
\nonumber\\
&=2\lambda^\top{\alpha_\ell\over p+\alpha_\ell} [\dot\lambda]-{2\over p+\alpha_\ell}
\left[\dot\lambda^\top{\alpha_\ell\over p+\alpha_\ell} \dot\lambda\right].
\end{align}
Taking $\dot\lambda$ from \eqref{im2} we get
\begin{align}
\lab{term1}
{\alpha_\ell\over p+\alpha_\ell} [\dot \xi]  &=2\lambda^\top{\alpha_\ell\over p+\alpha_\ell} \left[ {-R_s\over\beta}i+\frac{v}{\beta}-\frac{\sigma L_s}{\beta}{\dot{\hat{i}}}\right]
\nonumber\\&\quad
-{2\over p+\alpha_\ell}
\left[\left({-R_s\over\beta}i+\frac{v}{\beta}-\frac{\sigma L_s}{\beta}{\dot{\hat{i}}}\right)^\top \right.
\nonumber\\&\qquad
\times {\alpha_\ell\over p+\alpha_\ell} \left.\left[{-R_s\over\beta}i+\frac{v}{\beta}-\frac{\sigma L_s}{\beta}{\dot{\hat{i}}_{f}}\right]\right].
\end{align}

{
To simplify the notation, define the measurable filtered signals
\begin{align}
	 f_i &:= {\alpha_\ell \over p+\alpha_\ell }[i], \qquad
	\dot f_i:= {\alpha_\ell \,  p\over p+\alpha_\ell }[i], \nonumber\\
	f_v &:= {\alpha_\ell \over p+\alpha_\ell }[v], \qquad
	\dot f_v := {\alpha_\ell \, p\over p+\alpha_\ell }[v]. \nonumber
\end{align}
}

Using the definitions above rewrite \eqref{term1} as
\begin{align}
\lab{term12}
{\alpha_\ell \over p+\alpha_\ell } [ \dot \xi ]  &=-2{R_s\over\beta}\lambda^\top f_i+\frac{2}{\beta}\lambda^\top f_v-\frac{2 \sigma L_s}{\beta}\lambda^\top\dot{f_i}\nonumber\\
-{2\over \beta^2}{1\over p+\alpha_\ell }
&
\left[\left(-R_s i+v\right)^\top
\left(-R_s f_i+f_v-\sigma L_s\dot{f_i}\right)\right]
\nonumber\\
&
+{2 \sigma L_s \over \alpha_\ell  \beta^2}{\alpha_\ell \over p+\alpha_\ell }
\left[{di\over dt}^\top\left(-R_s f_i+f_v-\sigma L_s \dot{f_i}\right)\right].
\end{align}

Let us consider first the last right-hand term of \eqref{term12}, that is,
\begin{align*}
&{\alpha_\ell \over p+\alpha_\ell }
\left[{di\over dt}^\top\left(-R_s f_i+f_v-\sigma L_s\dot{f_i}\right)\right]=\nonumber\\
&=\left(-R_s f_i+f_v-\sigma L_s\dot{f_i}\right)^\top\dot{f_i}
\nonumber \\
& \quad -{1\over p+\alpha_\ell }\left[\left(-R_s\dot{f_i}+\dot f_v-\sigma L_s\ddot f_i\right)^\top\dot{f_i}\right]
\nonumber \\
&=\left(-R_s f_i+f_v-\sigma L_s\dot{f_i}\right)^\top\dot{f_i}
\nonumber \\
& \quad -{1\over p+\alpha_\ell }\left[\left(-R_s\dot{f_i}+\dot{ f_v}\right)^\top\dot{f_i}\right]+\sigma L_s{1\over p+\alpha_\ell }\left[\ddot f_i\,^\top\dot{f_i}\right]
\nonumber \\
&=\left(-R_s f_i+f_v-\sigma L_s\dot{f_i}\right)^\top\dot{f_i}
\nonumber \\
& \quad -{1\over p+\alpha_\ell }\left[\left(-R_s\dot{f_i}+\dot f_v\right)^\top\dot{f_i}\right]+\hal\sigma L_s {p\over p+\alpha_\ell }\left[\dot{f_i}^\top\dot{f_i}\right]
\nonumber \\
&=R_s\left(- f_i^\top\dot{f_i}+{1\over p+\alpha_\ell }\left[|\dot{f_i}|^2\right]\right)
\nonumber \\
& \quad +\sigma L_s \left(-\dot{f_i}^\top\dot{f_i} +\hal{p\over p+\alpha_\ell }\left[|\dot{f_i}|^2\right]\right)
\nonumber \\
& \quad +
f_v^\top\dot{f_i}-{1\over p+\alpha_\ell }\left[\dot f_v^\top\dot{f_i}\right].
\end{align*}

Replacing the latter identity in  \eqref{term12}, we see that ${\alpha_\ell \over p+\alpha_\ell } [\dot \xi] $ may be written as
\begin{align}
\lab{term13}
{\alpha_\ell \over p+\alpha_\ell } [\dot \xi] & = \lambda^\top \rho_1+\rho_2,
\end{align}
where we defined the  measurable signals
\begin{align*}
	\rho_1&:=\frac{2}{\beta}\left(-R_s  f_i-\sigma L_s\dot{f_i}+f_v\right),
	\\	\rho_2&:=-\frac{2}{\beta^2}{1\over p+\alpha_\ell }[v^\top f_v]
\nonumber \\
& \quad +R_s \mu_1+\sigma  \mu_2+R_s^2 \mu_3+R_s\sigma \mu_4+\sigma^2 \mu_5,
\end{align*}
and
\begin{align*}
	\mu_1&:={2\over\beta^2}{1\over p+\alpha_\ell }\left[i^\top f_v +v^\top f_i\right],
	\\
	\mu_2&:={2 L_s \over\alpha_\ell  \beta^2}f_v^\top\dot{f_i}+\frac{2 L_s}{\beta^2}{1 \over p+\alpha_\ell }\left[v^\top\dot{f_i}-{1\over \alpha_\ell }\left(\dot f_v^\top\dot{f_i}\right)\right],
	\\
	\mu_3&:=-{2\over\beta^2}{1\over p+\alpha_\ell }[i^\top f_i],
	\\
	\mu_4&:=-{2 L_s \over\beta^2}{1\over p +\alpha_\ell }\left[i^\top\dot{f_i}\right]\nonumber\\
	&\quad+{2 L_s \over\alpha_\ell \beta^2}\left(- f_i^\top\dot{f_i}+{1\over p+\alpha_\ell }\left[|\dot{f_i}|^2\right]\right),
	\\
	\mu_5&:={2 L_s^2 \over\alpha_\ell  \beta^2}\left(-|\dot{f_i}|^2 +\hal{p\over p+\alpha_\ell }\left[|\dot{f_i}|^2\right]\right).
\end{align*}

Combining  \eqref{reg1} and \eqref{term13} we get the identity
\begequ
\lab{reg11}
 \lambda^\top \rho_1+\rho_2 = -2 {R_r\over L_r} {\alpha_\ell \over p+\alpha_\ell }[\xi] + 2 R_r\beta{\alpha_\ell \over p+\alpha_\ell }[\lambda^\top i].
\endequ
We proceed now to analyze the the two right-hand terms of \eqref{reg11}. For the first term we get
\begin{align}
\lab{term2}
{\alpha_\ell \over p+\alpha_\ell }[\xi]
&=\xi-{1\over p+\alpha_\ell }[\dot \xi] + \epsilon_1(t)
\nonumber\\
&=\xi-{1\over \alpha_\ell }\left(\lambda^\top \rho_1 +\rho_2\right) + \epsilon_1(t),
\end{align}
where $\epsilon_1(t):=\frac{\alpha_\ell }{p + \alpha_\ell } [1(t)]$ is exponentially decaying term.

Finally, consider the second term of \eqref{reg11}
\begin{align}
\lab{term3}
{\alpha_\ell \over p+\alpha_\ell }[\lambda^\top i]&=\lambda^\top f_i-{1\over\beta}{1\over p+\alpha_\ell }\left[\left(-R_s i+v-\sigma L_s{di\over dt}\right)^\top f_i\right]\nonumber\\
&=\lambda^\top f_i+R_s{1\over\beta}{1\over p+\alpha_\ell }[i^\top f_i]
\nonumber \\
& \quad -\frac{1}{\beta}{1\over p+\alpha_\ell }[v^\top f_i]+
\frac{L_s \sigma}{\beta}{1\over p+\alpha_\ell }\left[{di^\top\over dt} f_i\right]\nonumber\\
&=\lambda^\top f_i+\rho_3
\end{align}
where  we defined the  measurable signal
\begin{align*}
\rho_3&:=R_s{1\over\beta}{1\over p+\alpha_\ell }[i^\top f_i]
 -\frac{1}{\beta}{1\over p+\alpha_\ell }[v^\top f_i]
\nonumber \\
& \quad +\frac{L_s \sigma}{\alpha_\ell  \beta}\left( f_i^\top\dot{f_i}
-{1\over p+\alpha_\ell }\left[|\dot{f_i}|^2\right]\right).
\end{align*}
The proof is completed by replacing \eqref{term2} and \eqref{term3} in \eqref{reg11}, grouping terms, and defining\footnote{We underscore the fact that, for each constant $\alpha_\ell$, we generate {\em different} signals $\rho_1,\rho_2,\rho_3$, $f_i$ and $f_v$, but this dependence is omitted to simplify the notation.}
\begin{align}
z(t, \alpha_\ell ) &:=\rho_2\\
\phi_{e}(t, \alpha_\ell) &:=
\begmat{
{2\over\alpha_\ell L}\rho_2+2\beta\rho_3 \\
-\rho_1\\
{2\over\alpha_\ell L}\rho_1+2\beta f_i\\
-{2\over L}} .
\end{align}
%
\section*{Proof of Lemma \ref{lem2}}
\label{app2}
%
Following the construction of the dynamically extended regressor we consider the set of LTI filters $\frac{\alpha_\ell }{p + \alpha_\ell }$ with different $\alpha_\ell >0$, $\ell = 1,\dots,6$, to generate different filtered signals $z(t, \alpha_\ell )$ and $\phi_{e}(t, \alpha_\ell)$. Piling this signals up and using \eqref{lre} we  obtain a matrix equation
\begin{align}
\lab{psiphithe}
	\Psi(t) = \Phi(t) \Theta(t),
\end{align}
where
\begin{align*}
	\Psi(t) & := \col \{z(t, \alpha_1 ), \dots, z(t, \alpha_6 )\} \in \mathbb{R}^6,
	\\
	\Phi(t) & :=  \col \{\phi_{e}^\top(t, \alpha_1 ), \dots, \phi_{e}^\top(t, \alpha_6 )\} \in \mathbb{R}^6,
	\\
	\Theta(t) & := \begmat{R_r \\ \lambda(t) \\ R_r\lambda(t) \\ R_r|\lambda(t)|^2} \in \mathbb{R}^6.
\end{align*}
Applying the next step of the DREM procedure we multiply \eqref{psiphithe} by the adjugate of the matrix $\Phi(t)$, denoted $\adj \{\Phi(t) \}$ to get \eqref{scalre} with the definitions
\begin{align}
\lab{zetae}
	\zeta_e(t) & := \adj \{\Phi(t) \} \Psi(t),
	\\
	\lab{deltae}
	 \Delta_e(t) & := \det\{\Phi(t) \}.
\end{align} 
\end{document}